\DeclareMathOperator*{\argmax}{arg\,max}
\lstdefinestyle{mystyle}{
    language=C, 
    basicstyle=\ttfamily\small, 
    numbers=left, 
    numberstyle=\tiny\color{gray}, 
    stepnumber=1, 
    numbersep=5pt, 
    tabsize=4, 
    frame=single, 
    breaklines=true, 
    postbreak=\mbox{\textcolor{red}{$\hookrightarrow$}\space}, 
}
\newcommand{\prog}{\mathcal{P}}
\newcommand{\reach}{\mathsf{Reach}}
\newcommand{\init}{\mathsf{Init}}
\newcommand{\mutex}{\mathsf{mutex}}
\newcommand{\pmutex}{\prog_{\mutex,2}}
\newcommand{\assign}{\gets}
\newcommand{\states}{S}
\newcommand{\MissedPos}{\mathsf{missedPos}}
\newcommand{\BadNeg}{\mathsf{badNegs}}
\newcommand{\NewNegs}{\mathsf{newNegs}}
\newcommand{\reached}{\mathsf{reached}}
\newcommand{\negs}{\mathsf{speculated}}
\newcommand{\program}{\mathcal{P}}
\newcommand{\sampleTrace}{\mathsf{SampleTrace}}
\newcommand{\learnInvariant}{\mathsf{ReviseInvariant}}
\newcommand{\len}{\mathsf{len}}
\newcommand{\mypara}[1]{\vspace{0.3em}\noindent {\bf #1.}}
\newcommand{\myipara}[1]{\vspace{0.3em}\noindent {\em #1.}}
\begin{document}
\title{Data-Driven Template-Free Invariant Generation}

\author{Yuan Xia\inst{1} \and
        Jyotirmoy V. Deshmukh\inst{1} \and 
        Mukund Raghothaman\inst{1} \and
        Srivatsan Ravi\inst{1}}

\authorrunning{Y. Xia et al.}
\institute{University of Southern California \\
\email{\{yuanxia,jdeshmuk,raghotha,srivatsr\}@usc.edu}}
\maketitle

\renewcommand{\baselinestretch}{0.97}
\setlength{\belowcaptionskip}{-20pt}

\begin{abstract}

Automatic verification of concurrent programs faces state explosion
due to the exponential possible interleavings of its sequential
components coupled with large or infinite state spaces. An alternative
is deductive verification, where given a candidate invariant, we
establish inductive invariance and show that any state satisfying the
invariant is also safe. However, learning (inductive) program
invariants is difficult. To this end, we propose a data-driven
procedure to synthesize program invariants, where it is assumed that
the program invariant is an expression that characterizes a (hopefully
tight) over-approximation of the reachable program states. The main
ideas of our approach are: (1) We treat a candidate invariant as a
classifier separating states observed in (sampled) program traces from
those speculated to be unreachable. (2) We develop an enumerative,
template-free approach to learn such classifiers from positive and
negative examples. At its core, our enumerative approach employs
decision trees to generate expressions that do not over-fit to the
observed states (and thus generalize). (3) We employ a runtime
framework to monitor program executions that may refute the candidate
invariant; every refutation triggers a revision of the candidate
invariant. Our runtime framework can be viewed as an instance of
statistical model checking, which gives us probabilistic guarantees on
the candidate invariant.  We also show that such in some cases, our
counterexample-guided inductive synthesis approach converges (in
probability) to an overapproximation of the reachable set of states.
Our experimental results show that our framework excels in learning
{\em useful} invariants using only a fraction of the set of reachable
states for a wide variety of concurrent programs. 

\end{abstract}

\section{Introduction}
\label{sec:intro}
\myipara{Motivation} Concurrent and distributed programs are
fundamental in modern computing, powering everything from operating
systems to web services. Ensuring their correctness is paramount, as
errors can lead to system crashes, data corruption, or security
vulnerabilities.  For example, AWS FreeRTOS is a real-time operating
system designed for IoT devices.  The Over-The-Air (OTA) system,
critical for reliable and secure firmware updates in the presence of
device and communication failures, relies on a sophisticated
distributed OTA protocol. To verify its correctness, the developers
employed the formal modeling language P for modeling and correctness
checking, uncovering three bugs in the model that pointed to potential
real-world implementation issues \cite{otacase}.

There have been several efforts to model such programs using modeling
languages such as Promela \cite{holzmann1991promela}, P
\cite{desai2013p}, and Ivy\cite{padon2016ivy}, and prove their
correctness using a variety of automatic and deductive techniques
\cite{baier2008principles,apt_verification_2009}. These languages
abstract away some of the complexity induced by real-world programs,
e.g., dynamic memory allocation and management, recursive functions,
dynamic function calls, etc., but still preserve the asynchronous
communication structure and complex coordination logic inherent in
distributed protocols. Furthermore, languages such as Promela and P
are usually accompanied by an ecosystem of tools for testing,
generating program traces, and verification\cite{delgado2004taxonomy,
desai2013p, ribeiro2005model}.

In this paper, we propose a technique to automatically learn program
invariants for arbitrary concurrent/distributed programs written in a
modeling language such as Promela or P\footnote{While the techniques
developed in this paper are general and applicable to arbitrary
modeling languages, as a proof-of-concept, we apply the technique to
distributed programs written in Promela.}.  A {\em program invariant}
is a predicate over the program's state variables that evaluates to
true in every state visited during any (infinite) program execution.
The invariant learning problem is well-motivated in the context of
proving program correctness: invariants form the basis of deductive
approaches to prove safety and liveness properties of programs
\cite{baier2008principles}. Beyond this obvious goal, {\em post hoc}
learned invariants can also facilitate programmer understanding of a
distributed computation, especially if the learned invariant matches
the programmer's understanding of the set of possible reachable
states.

\myipara{Data-driven invariant synthesis.} In {\em data}-{\em driven}
techniques, instead of statically analyzing the program text or the
state space, we leverage program executions to learn invariants.  The
objective of most of the data-driven invariant learning techniques is
to learn {\em safety invariants}, i.e., invariants whose intersection
with a user-provided set of unsafe states is empty.  Data-driven
techniques employ a counterexample-guided inductive synthesis approach
coupled with the use of an SMT solver, a theorem prover, or a model
checker to check if a candidate expression is an invariant, and to
check its safety or inductiveness
\cite{si2020code2inv,ryan2019cln2inv,padhi2016data,clarke2003counterexample,garg2014ice,icedt,ezudheen2018horn,li2017automatic,padon2016ivy}

A safety invariant is (by definition) specific to the safety property
provided by the user, and the set of states satisfying the safety
invariant could be as big as the set of safe states (many of which
could be unreachable). On the other hand, the exact set of reachable
states of a program is the smallest inductive invariant  for a
program. Obtaining an expression characterizing this tight invariant
decouples the safety property from the search for the invariant; in
fact such an invariant can help prove several distinct safety
properties.  The main question we ask in this paper is: {\em Can we
learn a (compact) expression over the program variables that is
satisfiable only by a ``tight'' overapproximation of the set of
reachable states?}. Such an expression can\footnote{Of course, this
depends on the actual set of unsafe states.} potentially serve as a
safety invariant, and can also provide a nice summary of the
distributed computation that has value beyond a specific verification
exercise. 

\myipara{Learning tight inductive invariants: Challenges and Approach}
Following Occam's razor principle in learning theory \cite{razor}, we
prefer invariant expressions that are shorter in length. However, this
poses an interesting challenge: In learning safety invariants,
observed program states are used as positive examples, unsafe states
as negative examples, and invariant learning is the well-framed
problem of learning a classifier. As we do not assume a specific
safety property, learning shortest invariant expressions may give
trivial answers, after all {\em true} is a valid invariant (in the
absence of any negative states)! To overcome this hurdle, we propose
the novel use of {\em speculative negative examples}.  We sample a set
of program executions; states encountered in these executions are
treated as positive examples, and from the set of heretofore
unobserved states, we randomly speculate some states to be
unreachable.  Now equipped with this set of positive and negative
examples, the problem of learning a classifier for these sets is again
well-defined. We treat the expression thus learned as a {\em candidate
invariant}, and next discuss how we can validate its invariance.

A key limitation in existing data-driven techniques is their
dependence on model checking or theorem proving to establish validity,
inductiveness, and safety of the invariant. To sidestep this hurdle,
we propose a statistical/runtime monitoring approach to get
probabilistic guarantees about the learned invariant. We simply
continue running the program or sampling its executions (assuming that
the scheduler picks different program executions with some non-zero
probability). If an execution refutes the candidate (by reaching a
state that violates the candidate or reaching a state speculated to be
unreachable), our method triggers a revision of the candidate
invariant. We make a probabilistic argument that for certain programs,
frequency of such revision events goes to zero, or that this procedure
converges to a (tight) over-approximation of the reachable states. In
the general case, runtime sampling of bounded-length program
executions is an instance of statistical model checking
\cite{agha2018survey,smc}. For a given probability threshold
$\epsilon$, we can compute the number of program runs $n$, where if
the candidate invariant is satisfied by $n$ bounded-length program
executions, then the probability of this invariant being correct is
greater than $\epsilon$.


A final contribution of this paper is that we eschew the use of
specific templates or predicates to define the space of candidate
invariants as used by techniques in
\cite{flanagan2001houdini,ernst2007daikon}.  We assume that the user
provides a grammar to define the (possibly large) set of
Boolean-valued expressions over program variables of interest, and
that we learn an expression from this grammar that successfully
separates the positive states from the speculated negative states.
Our approach combines decision-tree learning and an enumerative,
syntax-guided synthesis \cite{sygus} approach to identify approximate
invariants.




\myipara{Contributions}
To summarize, this paper makes the following contributions:
\begin{enumerate}[nosep,leftmargin=0.3in,label={(\roman*)}]
    \item We propose an algorithmic method that efficiently generates candidate invariants without relying on templates.
    \item We efficiently learn invariants from positive and speculative negative examples using heuristic optimization procedures during the learning process.
    \item Our framework blends continuous monitoring of candidate invariants and inference of revised invariants in a seamless runtime framework.
    \item We provide statistical guarantees on the soundness of our approach.
    \item We demonstrate the practical usefulness of our approach on several example distributed programs written in the Promela language.
\end{enumerate}
%

\section{Preliminaries}
\label{sec:prelim}
\newcommand{\numproc}{N}
\newcommand{\tracelength}{k}

In this section, we formalize the terminology needed to explain our
approach with the aid of Peterson's algorithm to ensure mutually
exclusive access to the critical section for two concurrent
processes\cite{peterson1981myths} (henceforth denoted as $\pmutex$ for
brevity). We show $\pmutex$ modeled using the Promela language
\cite{holzmann1991promela} in Fig.~\ref{fig:pmutex}. 

\newcommand{\bool}{\mathtt{bool}}
\newcommand{\inte}{\mathtt{int}}
\newcommand{\byte}{\mathtt{byte}}
\newcommand{\pc}{\mathtt{pc}}
\lstset{escapeinside={<@}{@>}}
\begin{figure}[t]
\begin{lstlisting}[style=mystyle]
// Peterson's solution to the mutual exclusion problem - 1981 
bool turn, flag[2];
byte ncrit;
active [2] proctype user() {
	assert(_pid == 0 || _pid == 1);
again:
	flag[_pid] = 1;
	turn = _pid;
	(flag[1 - _pid] == 0 || turn == 1 - _pid);
    <@\textcolor{blue}{
	ncrit++;}@>
    <@\textcolor{blue}{
	assert(ncrit == 1);	// critical section}@>
    <@\textcolor{blue}{
	ncrit--;}@>
	flag[_pid] = 0;
	goto again }
\end{lstlisting}
\caption{Peterson's mutual exclusion algorithm for 2 processes in Promela}
\label{fig:pmutex}
\end{figure}

\newcommand{\typeof}[1]{\mathtt{type}(#1)}

\myipara{Program structure and modeling assumptions}
Program variables, the program state, and program execution traces
are all standard notions, we define them formally so that we can
precisely state the problem we wish to solve.

\newcommand{\val}{\nu}
\newcommand{\vars}{V}

\begin{definition}[Variables, State] A type $t$ is a finite or an
infinite set. A program variable $v$ of type $t$ is a symbolic name
that takes values from $t$. We use $\typeof{v}$ to denote variable types. We use $\vars$ to denote the set of program variables.
A {\em valuation} $\val$ maps a variable $v$ to a specific value in
$\typeof{v}$. We use $\val(\vars)$ to denote the tuple containing
valuations of the program variables; a program state is a specific
valuation of its variables during program execution.
\end{definition}

\newcommand{\integers}{\mathbb{Z}}
Types may include: $\bool = \{0,1\}$, $\inte = \integers$, $\byte =
\{0,1,2^8-1\}$, finite enumeration types which is some finite set of
values, the program counter $\pc$ type that takes values in
the set of line numbers of the program. We assume that our modeling
language has a grammar to define syntactically correct expressions
consisting of program variables and operators, and a well-defined
semantics that defines the valuation of an expression using the
valuations of the variables. Given a set of variables
$\{v_1,\ldots,v_k\}$ and an expression $e(v_1,\ldots,v_k)$, we use
$\val(e)$ to denote the valuation of the expression when each $v_i$ is
substituted by $\val(v_i)$.

Most real-world programs have procedure calls and procedure-local
variables; we assume that the program is modeled in a language that
abstracts away such details. Real-world concurrent programs typically
use either a shared memory or message-passing paradigm to perform
concurrent operations of individual {\em threads} or {\em
processes}\footnote{In most modern programming languages, processes
and threads have been used to mean different abstract units of
concurrent operation. For the purpose of this paper, we assume that we
are using a modeling language such as Promela or P that abstracts away
from these finer details.}. We assume that the number of concurrent
processes in the program, $\numproc$, is known and fixed throughout
program execution. Thus, our modeling language does not have
statements to start a new process or terminate one. So, a concurrent
program for us is a set of $\numproc$ processes, with each process
being a sequence of atomic statements. We include statements such as
assignments and conditional statements (which execute sequentially),
and $\mathtt{goto}$ statements to alter the sequential control flow.
An assignment statement is of the form $v_i \assign e$, where $e$ is
an expression whose valuations must be in $\typeof{v_i}$. The formal
semantics of an assignment statement is that for all variables that do
not appear on the LHS of the assignment, the valuation remains
unchanged, while the valuation of $v_i$ changes to $\val(e)$.

\begin{example}
\label{ex:statetran}
For $\pmutex$, $\vars = \{\mathtt{\ell,\ell_2,turn, flag, ncrit}\}$,
where $\typeof{\ell,turn,flag[i],\\ncrit} =
(\pc\times\pc,\bool,\bool\times\bool,\byte)$. The variable $\ell$ is a
pair that maintains program counters for the two processes.  The
statements $\mathtt{flag[\_pid] = 1}$ and $\mathtt{turn = \_pid}$ are
assignments. The state $s_0$ is
\[
s_0 = \left(
\begin{aligned}
\ell_1 \mapsto 7
\ell_2 \mapsto 7
\mathtt{turn} \mapsto 0,
\mathtt{flag[0]} \mapsto 0,
\mathtt{flag[1]} \mapsto 0,
\mathtt{ncrit} \mapsto 0
\end{aligned}
\right) \qquad
\]
After Process 0 ($\mathtt{\_pid} = 0$) executes two assignment
statements ($\mathtt{flag[0]} \assign 1$ and $\mathtt{turn} \assign
0$), the program state changes to the following:\[
s_{2} = \left(
\begin{aligned}
\ell_1 \mapsto 9,
\ell_2 \mapsto 7,
\mathtt{turn} \mapsto 0,
\mathtt{flag[0]} \mapsto 1,
\mathtt{flag[1]} \mapsto 0,
\mathtt{ncrit} \mapsto 0 
\end{aligned}
\right)
\qquad \qed
\]
\end{example}
The execution semantics of a program are typically explained using the
notion of a labeled transition system.

\begin{definition}[Labelled Transition System (LTS) for a concurrent
program]
A labelled transition system is a tuple $(S, L, T, \init)$ where $S$
is the state space of the program, $T \subseteq S \times S$ is the 
transition relation, $L$ is the set of program statements that serve as 
labels for elements in $T$, and $\init$ is a set of initial program 
states.
\end{definition}
Transitions specify how a system evolves from one state to another,
and labels indicate the program statement that induced
the transition. Converting a program to its corresponding
LTS is a well-defined process (see \cite{baier2008principles});
where transition is added based on the effect of the corresponding
program statement on the program variables (including the program
counter).
For assignment statements, this involves updating the program counter and
the valuation of the variables on the LHS. For conditional statements,
transitions are added to the next state only if the valuation
corresponding to the current state satisfies the condition. For
$\mathtt{goto}$ statements, only the program counter is updated.
Computing the LTS for a concurrent program is a bit more tricky. We
typically use an interleaved model of concurrency, so from every
state, we consider $\numproc$ next states corresponding to each of the
$\numproc$ concurrent processes executing.

We assume that each of the sequential
processes involved in the computation is {\em deterministic}, and the
only source of nondeterminism is context switches due to an external
scheduler\footnote{Our techniques can also handle nondeterminism in
the sequential processes as long as the number of nondeterministic
choices available is fixed and known {\em a priori}.  This restriction
is only required to give the theoretical guarantees of convergence
that we later discuss. From a practical perspective, our method can be
applied to unbounded nondeterminism.}. We also remark that an LTS is finite if the type
of each program variable is finite, otherwise an LTS can have an
infinite set of states.

\begin{definition}[Reachable States] For any program $\prog$, its set
of reachable states $\reach(\prog, \init)$ is defined as the set of
all states that can be reached from an initial state $s_0 \in S$
through the execution of the program statements.  Formally,
$\reach(\prog,\init)$ is the smallest set $R$ that satisfies the
following:
\begin{enumerate}
    \item $s_0 \in \init \implies s \in R$, and,
    \item $s \in R \land (s,s')\in T \implies s'\in R$.
\end{enumerate}
\end{definition}

\begin{example}
We remark that the state $s_2$ in Example~\ref{ex:statetran}is reachable from $s_0$, since: \(s_0
\xrightarrow{flag[\_pid]=1;turn=\_pid} s_2\). States
where $ncrit=2$ are unreachable from $s_0$. 
$\hfill  \qed$
\end{example}

\begin{definition}[Program Trace]
A program trace $\sigma$ of length $\tracelength = \len(\sigma)$ is a
sequence of states $s_0,s_1,\ldots,s_{\tracelength-1}$, s.t., $s_0\in
\init$, and for all $j \in [1,\tracelength-1]$, $(s_{j-1},s_j) \in T$.
\end{definition}

\newcommand{\sttt}[1]{\langle#1\rangle}
\newcommand{\ttt}[2]{\underbrace{\sttt{#1}}_{#2}}
\begin{example}
\label{ex:trace}
A possible trace $\sigma$ of $\pmutex$ with $\len(\sigma) = 5$ is shown below. 
\[
\left(
\begin{array}{l}
\ttt{7\!,7\!,0,\!0,\!0,\!0}{s_0},
\ttt{8\!,7\!,1,\!0,\!0,\!0}{s_1},
\ttt{9\!,7\!,1,\!0,\!0,\!0}{s_2},
\ttt{11\!,7\!,1,\!0,\!0,\!1}{s_3},
\ttt{13\!,7\!,1,\!0,\!0,\!0}{s_4},
\end{array}
\right) \qed
\]
\end{example}

\begin{remark}
To obtain a program trace, starting from a random $s_0 \in \init$, we
can randomly sample a successor $s_1$ from all possible pairs $(s_0,s')
\in T$, and repeat this procedure from each subsequent $s_i$. Some of the
successor states for a given state $s$ correspond to a context switch
for the distributed program (as it may require a different process to
execute its atomic instruction than the one that executed to reach the
state $s$). To randomly sample the initial or successor
states, we need a suitable distribution over the initial states and outgoing labelled
transitions from a given state.  This distribution is defined by the
scheduler and assumed to be {\em unknown} to the program developer. 
\end{remark}

\begin{definition}[Invariants]
An invariant $I$ is a Boolean-valued formula over program variables
and constants that is satisfied by every reachable
program state. 
\end{definition}

\begin{example}
For instance, in the Peterson's model, an invariant is $ncrit \leq 1$.
This invariant is useful to show that at most one process is in the
critical section at any given time.  $\hfill \qed$
\end{example}


\noindent {\em \underline{Problem statement}.} Let $\prog$ be a distributed program.
Let $\reach(\prog, \init)$ be the set of reachable states of $\prog$. The objective of this
paper is to design an algorithm to learn a candidate program invariant
$\phi$ that has the following properties:
\begin{enumerate}
\item 
Soundness of the invariant:
    \( (s \in \reach(\prog,\init)) \Rightarrow (s \in \phi) \) 
\item 
(Probabilistic) tightness of the invariant:
\begin{equation}
\label{eq:probtight}
    \Pr[\ (s \not\in \reach(\prog)) \wedge (s \in \phi) \ ] < \epsilon
\end{equation}
\end{enumerate}

\begin{remark}
Checking the soundness of the candidate invariant typically requires
the use of a verification tool such as a model checker or theorem
prover. This use of a verification tool is {\bf not} a part of our
invariant synthesis procedure. However, for experiments,
we use the Spin model checker to show that our candidate invariants are indeed valid program invariants and to demonstrate that we can (with high
probability) learn a valid invariant {\em without using a verification
tool}.  However, we note that our procedure {\em does not guarantee}
soundness of the synthesized invariant. Instead, the procedure can
only provide a weaker probabilistic guarantee: \mbox{\( \Pr((s \in
\reach(\prog,\init)) \wedge (s \not\in \phi)) < \epsilon\)} for some
small value of $\epsilon.$
\end{remark}

\section{Data-driven Invariant Generation}
\label{sec:solution}
\DontPrintSemicolon
\begin{algorithm}[!t]
\caption{$\mathsf{InvGen}$}\label{alg:InvSynAbs}
\SetKwInOut{Input}{input}
\SetKwInOut{Output}{output}
\Input{Distributed/Concurrent program $\program$}
\Output{A candidate invariant}
$\reached \gets \{\}$,  $\negs \gets \{\}$,  $\phi \gets \mathit{false}$\;
\Repeat{$n \le$ trace budget for statistical verification}{
    $T \gets \sampleTrace(\program)$ \nllabel{al:sample}
    \tcp*[l]{Sample a random program trace}
    $\MissedPos \gets \{s \mid s\in T \wedge s\not\models \phi\}$ \nllabel{al:newpos} 
    \tcp*[l]{Reached States not in $\phi$}
    $\reached \gets \reached \cup \MissedPos$ \tcp*[r]{Update the set of reached states} \nllabel{al:update}
    \;
    \tcc{Remove reached states from speculated unreachable states, and add new speculated unreachable states}
    $\NewNegs \gets \{s \mid s \in \mathsf{randomSample}(V,i) \wedge (s \notin \reached)\}$ \nllabel{al:newspec} \;
    $\BadNeg \gets \{s\!\mid s\in T\! \wedge\! s\in \negs \}$  \nllabel{al:badnegs} \;
    $\negs \gets (\negs \cup \NewNegs) \setminus \BadNeg$ \nllabel{al:updateneg} \tcp*[l]{Update speculated}
    \;
    \If {$\MissedPos \neq \emptyset \lor \BadNeg \neq \emptyset \lor \NewNegs \neq \emptyset$}{
        $\phi \gets \learnInvariant(\reached,\negs)$ \nllabel{al:revise} \tcp*[r]{Invariant Revision}
        $n \gets 0$ \nllabel{al:reset}
    }\Else {
        $n \gets n + 1$ \nllabel{al:incround}
    }
} 
\end{algorithm}

In this section, we present the high-level invariant synthesis
approach in Algorithm~\ref{alg:InvSynAbs}. At a high level, our
algorithm learns new invariants from randomly generated program
traces. We use the set $\reached$ to denote the set of states reached
during exploration.  The candidate invariant is denoted by $\phi$. In
Line~\ref{al:sample}, we randomly sample a new program trace of an
{\em a priori} fixed length.  The algorithm maintains three items
across all iterations: $\reached$, $\phi$, and the set of states
speculated to be unreachable, denoted by $\negs$.  Initially,
$\reached$ and $\negs$ are empty, and $\phi$ is set to
$\mathit{false}$.  If there are states in the sampled trace that are
not satisfied by the current invariant, we call them missed positive
states or $\MissedPos$ (Line~\ref{al:newpos}).  As these states are
discovered as reachable, they are added to $\reached$
(Line~\ref{al:update}). States that are speculated to be unreachable
but are discovered to be reached in the trace are called $\BadNeg$
(Line~\ref{al:badnegs}).  In every iteration we speculate some $i$ new
states ($\NewNegs$) to be unreachable, these are chosen randomly from
the set of possible program states (i.e. the Cartesian product of the
domains of program variables), but with care not to pick any state in
$\reached$.  In each iteration, we update $\negs$ to include the
states in $\NewNegs$ and remove the states in $\BadNeg$
(Line~\ref{al:updateneg}).  If either $\MissedPos$ or $\BadNeg$ are
non-empty, it means that the invariant is smaller than the set of
reachable states and needs to be revised (Line~\ref{al:revise}). The
exact procedure to learn the invariant expression from positive
($\reached$) and negative ($\negs$) state examples is explained in
Section~\ref{sec:learning}.  

Once a new invariant expression is learned, we reset the number of times the
invariant will be challenged to $0$ (Line~\ref{al:reset}). If the invariant
expression is {\em not revised} in a particular iteration, it means that
the invariant met that round's challenge (through the sampled trace),
incrementing the counter $n$ (Line~\ref{al:incround}). If the invariant
survives a user-specified number of rounds the algorithm terminates.
For example, this bound could be derived from the number of runs
required in a statistical model checking approach to get the desired
level of probability guarantee $\epsilon$ specified in
\eqref{eq:probtight}.



We now explain the detailed steps of the algorithm with the help of the
running example of $\pmutex$.

\mypara{Sampling Program Traces}
The $\sampleTrace(\cdot)$ sub-routine generates random program
executions containing a fixed number of steps. As mentioned in
Section~\ref{sec:prelim}, the sequential parts of the program are
deterministic. After each program statement executed by process $i$,
it is possible for the scheduler to pick the program statement in
process $j$ to execute next. We assume that at any step, for every
$i$, the random scheduler picks the next statement in process $i$ with
non-zero probability. 

Recall that the state of $\pmutex$ is a valuation of $\mathtt{(turn,
flag[0], flag[1],ncrit)}$. The trace of valuations of these state
variables from Example~\ref{ex:trace} is a possible random trace
of $\pmutex$ obtained by using $\sampleTrace(\pmutex)$, reproduced
here for ease of exposition:
\[
\sttt{7\!,7\!,0,\!0,\!0,\!0},
\sttt{8\!,7\!,1,\!0,\!0,\!0},
\sttt{9\!,7\!,1,\!0,\!0,\!0},
\sttt{11\!,7\!,1,\!0,\!0,\!1},
\sttt{13\!,7\!,1,\!0,\!0,\!0},
\sttt{7\!,7\!,0,\!0,\!0,\!0}
\]
Consider the first iteration of Algorithm~\ref{alg:InvSynAbs},
then the set $\MissedPos$ will be empty (as the set $\reached$ is
empty). In Line~\ref{al:update}, we will add all states from the trace
above to $\reached$. 

\mypara{Speculative Negative Examples}
In Line~\ref{al:newspec}, we speculatively add random states not in
$\reached$ to the set $\NewNegs$. Suppose we add the following
states:
\begin{equation}\label{eq:speculated}
\negs = \{\sttt{7\!,8\!,0\!,1\!,0\!,0}, \sttt{7\!,9\!,0\!,1\!,1\!,0}\}
\end{equation}
Again, since this is the first iteration of
Algorithm~\ref{alg:InvSynAbs}, the set $\BadNeg$ is empty. 

\mypara{Learning the invariant} Since the set $\NewNegs$ is not empty, we
now use the positive ($\reached$) and negative ($\negs$)
examples to learn a candidate invariant $\phi$ that overapproximates
$\reached$ but has minimal overlap with $\negs$. In other words,
we learn $\phi$ s.t. $\forall s\in\reached, s\models \phi$, and
$|\{s\mid s\models\phi \wedge s\in \negs\}|$ is less than some
threshold.  The details of the invariant learning procedure are
presented in the next section.  For the running example, suppose we
learn the invariant $\phi_1 \equiv \mathtt{flag[1]}=0$. We can check
that all states in $\reached$ satisfy $\phi_1$ and none of the states
in $\negs$ satisfy it.

\mypara{Challenging the invariant} The next step is to find sample
traces to challenge the invariant. We use at most $n$ such sample
traces in an attempt to refute the invariant, where the number $n$ is
either user-specified based on some heuristics or based on systematic
statistical reasoning required to get the desired probability
threshold $\epsilon$ \cite{smc}.  Suppose we sample the following
trace next:
\[
\sttt{7\!,7\!,0,\!0,\!0,\!0},
\sttt{7\!,8\!,0,\!1,\!0,\!0},
{\bf \sttt{7\!,9\!,0,\!1,\!1,\!0}},
\sttt{7\!,11\!,0,\!1,\!1,\!1},
\sttt{7\!,13\!,0,\!1,\!1,\!0},
\sttt{7\!,7\!,0,\!0,\!1,\!0}
\]
Clearly, the states in this trace refute the invariant $\phi_1$
because $\mathtt{flag[1]}\neq 0$, which means that these states will
appear in the set $\MissedPos$.  We note that the second state in the
set $\negs$ (shown in \eqref{eq:speculated}) was speculated to
be unreachable but is actually reached in the trace (shown in bold).
Thus, in Line~\ref{al:badnegs}, this state gets added to the set
$\BadNeg$, and is removed from $\negs$ and inserted in
$\reached$. With the revised $\negs$ and $\reached$, we can
re-learn the invariant.

\subsection{Convergence of InvGen}

In this section, we argue that under the right set of assumptions, the
$\mathsf{InvGen}$ algorithm converges to a program invariant that is
sound.  For simplicity, we assume that the distributed program has a
unique initial state $s_0$; the arguments made in this section easily
generalize to a finite number of initial states by simply adding a
dummy unique initial state that (nondeterministically) transitions to
any of them.

Next, we introduce a probabilistic interpretation of the
nondeterministic scheduler. For every set of transitions
$\{(s,s_1),\ldots,(s,s_\ell)\}$ emanating from a state $s$, we assume
that there exists a categorical probability distribution $P(s'\mid
s)$, i.e., we associate with each transition $(s_i,s_j)$ some
probability $p_{ij} = P(s'=s_j\mid s = s_i)$ s.t. $\sum_i p_{ij} = 1$,
and $\forall j, p_{ij} > 0$.
This effectively converts the program LTS into a Markov chain. The
transition probability matrix of such a Markov chain is a matrix where
the $(i,j)^{th}$ entry denotes the probability of transitioning from
state $s_i$ to state $s_j$ due to a single atomic statement
(assignment or conditional). 
We define the $k$-reach set of the program $\prog$ as the set of all
states that can be reached from $s_0$ with at most $k$ transitions.
Formally,
\(
\reach^k(\prog,s_0) = \left\{ s \middle| (s_0,s) \in \cup_{j=1}^{k} T^j \right\}
\).
We note that irrespective of the size of the state space (finite or
infinite), $\reach^k(\prog,s_0)$ for any distributed program $\prog$
with a finite number of concurrent processes is finite.

We use $\sigma \sim \prog$ to denote a trace sampled from $\prog$
using the probabilistic interpretation for the scheduler. Formally, if
$\sigma = (s_0, s_1, \ldots, s_{k-1})$, then for each $j \in [1,k-1]$,
$s_j \sim P(s'\mid s=s_{j-1})$.  Let $\states(\sigma)$ be abuse of
notation to denote the set of states present in the trace $\sigma$.

\begin{lemma}
\label{lem:q}
Let $Q = \states(\sigma_1) \cup \ldots \cup \states(\sigma_m)$, i.e.,
the set of states visited after $m$ $k$-length random traces $\sigma_j$ sampled 
from $\prog$. Then, there exists a finite $m$ s.t. $Q$ almost surely 
contains $\reach^k(\prog,s_0)$.
\end{lemma}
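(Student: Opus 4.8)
The plan is to reduce the almost-sure covering claim to a union bound over the finitely many states of $\reach^k(\prog,s_0)$, using the fact that every sampled trace independently has a strictly positive chance of visiting each such state. The excerpt already guarantees that $\reach^k(\prog,s_0)$ is finite for any program with finitely many processes, which is the property that makes the union bound effective; this finiteness is the one structural ingredient I would lean on heavily.

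First I would fix an arbitrary target state $s \in \reach^k(\prog,s_0)$. By definition of $\reach^k$ there is a path $s_0 = t_0, t_1, \ldots, t_\ell = s$ with $\ell \le k$ and $(t_{i-1},t_i) \in T$ for each $i$, i.e. a witness that $s$ is reachable within the trace horizon. Under the Markov-chain interpretation each step of this path has probability $P(t_i \mid t_{i-1}) > 0$, because the scheduler assigns strictly positive probability to every outgoing transition. A single random trace $\sigma \sim \prog$ follows this path as its prefix (and hence visits $s$) with probability at least $\pi_s := \prod_{i=1}^{\ell} P(t_i \mid t_{i-1}) > 0$; whatever happens after $t_\ell = s$ is irrelevant since $s$ has already been recorded in $\states(\sigma)$. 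Taking $p := \min_{s \in \reach^k(\prog,s_0)} \pi_s$, the minimum is over a finite set of strictly positive numbers and is therefore itself strictly positive.

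Next I would use independence of the $m$ sampled traces. For a fixed $s$, the probability that none of $\sigma_1,\ldots,\sigma_m$ visits $s$ is at most $(1-\pi_s)^m \le (1-p)^m$, so a union bound over the finite set $\reach^k(\prog,s_0)$ gives
\[
\Pr[\reach^k(\prog,s_0) \not\subseteq Q] \;\le\; \sum_{s \in \reach^k(\prog,s_0)} (1-\pi_s)^m \;\le\; |\reach^k(\prog,s_0)|\,(1-p)^m .
\]
Since $0 < p \le 1$, the right-hand side tends to $0$ as $m \to \infty$; concretely, for any confidence level one may solve for the finite $m$ that drives it below a prescribed threshold. For the genuinely almost-sure reading, I would note that $Q$ is monotonically non-decreasing in $m$, so the events $\{s \notin Q\}$ are nested decreasing; continuity of measure from above then yields $\Pr[s \notin Q \text{ for all } m] = \lim_m \Pr[s \notin Q] \le \lim_m (1-\pi_s)^m = 0$, and intersecting over the finitely many $s$ shows that almost surely there is a finite (random) number of traces after which $Q \supseteq \reach^k(\prog,s_0)$.

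The main obstacle is not the probabilistic estimate, which is routine, but pinning down the precise meaning of the statement. As written, ``there exists a finite $m$ such that $Q$ almost surely contains $\reach^k$'' cannot mean probability exactly $1$ for a single fixed $m$, since every finite collection of traces has positive probability of missing some state; the defensible readings are either the high-probability one (a finite $m$ achieving any desired $1-\epsilon$) or the almost-sure eventual-coverage one above, and I would state explicitly which is intended. A secondary nuisance is the off-by-one between the trace-length convention ($\len(\sigma)=k$ means $k-1$ transitions) and the depth $k$ in $\reach^k$; I would reconcile these by taking the sampled traces long enough to realize every witnessing path, so that the prefix argument for $\pi_s > 0$ goes through.
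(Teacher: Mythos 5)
Your proof is correct and takes essentially the same route as the paper's own (much terser) argument, which likewise observes that every transition has non-zero probability, hence every state in $\reach^k(\prog,s_0)$ is reached with positive probability by a random trace, and that finiteness of $\reach^k(\prog,s_0)$ plus sufficiently many samples yields the claimed coverage. Your union bound with the geometric estimate $(1-p)^m$ and the continuity-of-measure step simply make the paper's sketch rigorous, and your observation about the quantifier order in ``there exists a finite $m$ s.t.\ $Q$ almost surely contains $\reach^k(\prog,s_0)$'' is a fair criticism of the statement as written, which the paper itself does not address.
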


\noindent The proof relies on the assumption that every nondeterministic transition has
a non-zero probability. Thus, for every state in $\reach^k(\prog,s_0)$, the
probability of reaching it from $s_0$ is non-zero. From this, it follows that
if there are a sufficiently large number of sample traces picked, every state
in $\reach^k(\prog,s_0)$ must be eventually visited.
Next, we observe that in the $n^{th}$ revision round, the $\mathsf{InvGen}$ algorithm (Alg.~\ref{alg:InvSynAbs}) guarantees that the invariant $\phi_n$ learned from the set
of visited states ($\reached_n$) and speculative unreachable states
($\negs_n$) satisfies that $\phi_n \Rightarrow \reached_n$, i.e. $\phi_n$ is
strictly larger than the set of positive states. Now, we 
state our theorem, which establishes that the learned invariants are
tight.

\begin{theorem}
For an arbitrary distributed program $\prog$ with a finite number of
concurrent processes, and a procedure to sample $k$-length traces of
$\prog$ with a scheduler that is guaranteed to pick every
nondeterministic transition with non-zero probability,
Algorithm~\ref{alg:InvSynAbs} identifies an invariant expression
$\phi_\ell$ satisfying the property: $(s \in \phi_\ell) \Rightarrow
(s \in \reach^k(\prog,s_0))$ after $\ell$ revisions.
\end{theorem}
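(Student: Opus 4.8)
The plan is to combine the positive-coverage guarantee of Lemma~\ref{lem:q} with a dual negative-coverage argument, and then appeal to the consistency of the learner $\learnInvariant$. Throughout I abbreviate $R^k = \reach^k(\prog,s_0)$ and recall from the preceding discussion that $R^k$ is finite even when $S$ is infinite, and that every revision maintains $\reached_n \subseteq \phi_n$.

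First I would show that the positives stabilize. By Lemma~\ref{lem:q}, after finitely many sampled traces the union of observed states almost surely contains $R^k$; since every state appearing in a $k$-length trace lies in $R^k$, this gives $\reached = R^k$ after an almost surely finite number of rounds. From that point $\MissedPos = \emptyset$ forever. Moreover, any state that was once speculated negative while still lying in $R^k$ has, by the same lemma, eventually been observed in a trace, triggering a $\BadNeg$ event that removes it from $\negs$ and inserts it into $\reached$. Hence after this stabilization round $\negs \cap R^k = \emptyset$ and no further $\BadNeg$ events occur.

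Next I would run the dual argument on the negatives. Because $\NewNegs$ is drawn from $S \setminus \reached = S \setminus R^k$ and, post-stabilization, nothing is ever removed from $\negs$, the speculated set grows monotonically inside $S \setminus R^k$. In the finite-state setting $S \setminus R^k$ is finite, and since each such state is selected by $\mathsf{randomSample}$ with non-zero probability, the same geometric tail-bound argument underlying Lemma~\ref{lem:q} shows that almost surely, after finitely many further rounds, $\negs = S \setminus R^k$. Let $\ell$ index the first revision after both $\reached$ and $\negs$ have stabilized. At round $\ell$ the inputs to $\learnInvariant(\reached,\negs)$ form a complete and consistent labeling of $S$: the positives are exactly $R^k$ and the negatives are exactly $S \setminus R^k$. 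A classifier that achieves zero overlap with $\negs$ (attainable once the labeling is consistent and the grammar is expressive enough) therefore satisfies $\phi_\ell \cap (S \setminus R^k) = \emptyset$, i.e. $(s \in \phi_\ell) \Rightarrow (s \in R^k)$, which is the claim; combined with $\reached \subseteq \phi_\ell$ this in fact pins $\phi_\ell = R^k$. Finiteness of $\ell$ follows because, once $\reached$ and $\negs$ are fixed, $\learnInvariant$ receives identical inputs and produces no genuinely new invariant, so no further revisions are triggered.

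I expect the main obstacle to be the negative-coverage step in the infinite-state case: when $S$ is infinite, $S \setminus R^k$ is infinite, so $\negs$ can never exhaust it, and excluding the finite speculated set does not by itself force $\phi_\ell \subseteq R^k$. Here the argument must lean on $R^k$ being finite together with the inductive bias of the decision-tree/enumerative learner to recover a minimal expression coinciding with $R^k$, and it is this generalization step, rather than the probabilistic coverage of states, that is delicate. A secondary subtlety is the gap between the learner's threshold guarantee (bounded overlap with $\negs$) and the exact zero-overlap that strict tightness requires, which I would close by assuming perfect separation is achievable once the labeling is consistent and the grammar is sufficiently expressive.
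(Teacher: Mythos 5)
Your first paragraph is, in essence, the paper's entire proof. The paper argues via Lemma~\ref{lem:q} that every state of $\reach^k(\prog,s_0)$ is almost surely visited after finitely many rounds, that each discovery of an uncovered reachable state triggers a revision which strictly enlarges $\reached_n$, and that since $\reach^k(\prog,s_0)$ is finite this growth must terminate with $\reached_\ell$ covering $\reach^k(\prog,s_0)$. Combined with the recall-$1$ requirement of the learner (every state of $\reached_\ell$ satisfies $\phi_\ell$), what the paper's argument actually delivers is the containment $(s \in \reach^k(\prog,s_0)) \Rightarrow (s \in \phi_\ell)$ --- the \emph{reverse} of the implication printed in the theorem statement. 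That this reversal is a typo rather than your misreading is corroborated by the corollary, which uses the theorem in the form ``an invariant that \emph{includes} the true reachable set of states,'' and by the sentence just before the theorem, where ``$\phi_n \Rightarrow \reached_n$'' is glossed as ``$\phi_n$ is strictly larger than the set of positive states.'' The paper never argues anything about states lying outside $\reach^k(\prog,s_0)$.

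You instead took the printed implication literally and tried to prove genuine tightness, $\phi_\ell \subseteq \reach^k(\prog,s_0)$, by adding a dual negative-coverage argument. That second half has no counterpart in the paper, and the gaps you flag in it are real and not closable under the theorem's stated hypotheses: the theorem is asserted for an \emph{arbitrary} program with finitely many processes, and the paper explicitly allows the state space $S$ to be infinite (only $\reach^k(\prog,s_0)$ is guaranteed finite), so $\negs$ can never exhaust $S \setminus \reach^k(\prog,s_0)$ and nothing forces the learned expression to exclude never-speculated states. Even in the finite-state case your argument needs $\mathsf{randomSample}$ to eventually cover all of $S \setminus \reach^k(\prog,s_0)$, the precision requirement to be exactly $1$ rather than the algorithm's $\precision > \delta$, and the grammar to be expressive enough to carve out $\reach^k(\prog,s_0)$ exactly --- none of which are hypotheses of the theorem. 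So as a proof of the statement at its stated generality the proposal does not go through; but your diagnosis is accurate, and it exposes that the literal statement is not provable from the algorithm's guarantees. The portion of your proposal that matches what the paper actually proves is your first paragraph alone, stopped at the conclusion $\reach^k(\prog,s_0) \subseteq \phi_\ell$.
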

\begin{proof}
The proof sketch is as follows:
    Firstly, after any revision round $n < \ell$, we show that $|\reached_n| <
    |\reached_{n+1}|$, i.e.  the set of visited states strictly increases
    as long as InvGen uses a large enough number of rounds before
    declaring the candidate invariant as the final synthesized
    invariant.
    Secondly, we argue that if $\reached_{n}$ strictly increases in size, as
    $\reach^k(\prog,s_0)$ is a finite set, eventually $\reached_n$ must
    exceed $\reach^k(\prog,s_0)$ in size, guaranteeing the property
    indicated in the statement of the theorem.

For the first part of the proof, we observe that an invariant revision
happens if a new state $s$ is visited such that $s \not\models \phi$
or $s \in \negs$. From Lemma~\ref{lem:q}, there is a finite number of
rounds after which almost surely every $k$-step reachable state is
visited. Thus, if there is a state not in the candidate invariant but
in the $k$-reach set, it will be visited, triggering a revision.
Moreover, this state will be added to $\reached_n$, i.e.
$\reached_{n+1}$ will be strictly larger than $\reached_n$. \hfill $\qed$

\end{proof}

\begin{corollary}
Algorithm~\ref{alg:InvSynAbs} can always find an invariant that
includes the true reachable set of states for a finite-state
distributed program.
\end{corollary}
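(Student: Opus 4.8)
The plan is to combine the two ingredients the preceding development already supplies: Lemma~\ref{lem:q}, which guarantees eventual coverage of the $k$-reach set under the non-zero-probability scheduler, and the invariant maintained by the learning step of Algorithm~\ref{alg:InvSynAbs}, namely that every revised candidate over-approximates the currently visited states ($\reached_n \subseteq \phi_n$, since the learner is required to satisfy $\forall s\in\reached,\ s\models\phi$). The only genuinely new observation needed for the finite-state case is structural: when the state space of $\prog$ is finite, reachability saturates at a finite horizon.

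First I would bound the reachability horizon. Since $\prog$ is finite-state, $\reach(\prog,s_0)$ is finite, and every reachable state is reached along a simple path in the transition graph of length at most $|\reach(\prog,s_0)|-1$. Hence there is a finite $k^\ast$ (for instance $k^\ast = |\reach(\prog,s_0)|$) with $\reach^{k^\ast}(\prog,s_0) = \reach(\prog,s_0)$, and $\reach^{k}(\prog,s_0) = \reach(\prog,s_0)$ for all $k \ge k^\ast$. I would therefore instantiate the algorithm with a sampling trace length $k \ge k^\ast$.

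Next I would apply Lemma~\ref{lem:q} at this $k$: after finitely many sampled $k$-length traces, the set of visited states almost surely contains $\reach^{k}(\prog,s_0) = \reach(\prog,s_0)$. Because $\reached_n$ is monotone across iterations (Line~\ref{al:update} only adds states) and every visited state is inserted into it, almost surely there is a revision round $\ell$ with $\reach(\prog,s_0) \subseteq \reached_\ell$. The learning step forces $\reached_\ell \subseteq \phi_\ell$, so chaining the inclusions yields $\reach(\prog,s_0) \subseteq \phi_\ell$, which is exactly the claim that the candidate includes the true reachable set. Together with the tightness direction $\phi_\ell \subseteq \reach^{k}(\prog,s_0) = \reach(\prog,s_0)$ from the preceding theorem, this pins down $\phi_\ell = \reach(\prog,s_0)$ for $k \ge k^\ast$.

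The main obstacle I anticipate is not the inclusion bookkeeping but reconciling the almost-sure, finite-but-unbounded coverage time of Lemma~\ref{lem:q} with the algorithm's termination rule, which stops once the candidate survives a fixed budget of challenge rounds without a revision. I would argue that the budget can be taken large enough that, with the target probability, no reachable state remains unvisited at termination; equivalently, the corollary should be read as an existence statement (``can always find'') quantified over suitable choices of trace length $k \ge k^\ast$ and challenge budget, rather than a claim about every parameterization. A secondary subtlety is guaranteeing that a speculative negative never permanently excludes a genuinely reachable state from $\phi_\ell$; this is handled by the $\BadNeg$ step (Line~\ref{al:badnegs}), which moves any wrongly speculated state back into $\reached$ before the next revision, so the invariant $\reached_n \subseteq \phi_n$ is preserved throughout.
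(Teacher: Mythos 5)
Your proposal is correct and takes essentially the same route as the paper's own proof: the single new ingredient in both is that a finite-state program has a finite saturation horizon $k^*$ (bounded by the diameter of the state space) at which $\reach^{k}(\prog,s_0) = \reach(\prog,s_0)$ for all $k \ge k^*$, after which Lemma~\ref{lem:q} and the preceding theorem's machinery finish the argument. Your write-up is in fact more careful than the paper's terse proof --- you explicitly derive the inclusion $\reach(\prog,s_0) \subseteq \phi_\ell$ (the direction the corollary actually asserts, whereas the theorem as literally stated gives the reverse containment) and you flag the interaction with the finite challenge budget, both of which the paper glosses over.
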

\begin{proof}
The proof follows from the fact that there is a $k$ for every
finite-state program s.t. the $k$-reach set is a least fixed point, or
the true reachable set of states. Thus, with a large enough $k$
(diameter of the state space), Algorithm~\ref{alg:InvSynAbs} will
always find an invariant containing the true reachable set of states. \hfill $\qed$
\end{proof}

\section{Decision Tree Learning for Invariants}
\label{sec:learning}
In this section, we will explain the details of the $\learnInvariant$
method in Algorithm~\ref{alg:InvSynAbs}. The $\learnInvariant$ algorithm combines decision tree
learning, syntax-guided enumerative learning, and a novel sub-sampling
approach that helps facilitate rapid inference of invariants.
%
%
\newcommand{\grammar}{G}
\newcommand{\atom}{\mathsf{atom}}
\newcommand{\atoms}{\mathsf{atoms}}
\newcommand{\param}{\mathbf{p}}
\newcommand{\precision}{\mathsf{precision}}
\newcommand{\recall}{\mathsf{recall}}

\mypara{Syntax-guided Synthesis} 
Our main idea for learning invariant expressions is drawn from
syntax-guided synthesis \cite{sygus}. We assume that we are given
a grammar $\grammar$ that defines Boolean-valued formulas over {\em parametric atomic predicates}:
\(
\phi ::= \atom(\param) \mid \neg \phi \mid \phi \wedge \phi \mid \phi \vee \phi.
\)
Here, $\atom(\param)$ expressions are user-supplied,
Boolean-valued expressions over the program variables and parameters that we call parametric atomic predicates.
Parameters $\param$ in $\atom$ are placeholders for constant values
of corresponding types. Replacing all parameters with appropriate
constants gives a predicate symbol over the program variables.
\begin{example}
Consider the following grammar for parameterized atom symbols.
\[
\atom(c) ::= (x \le c) \mid (x \ge c) \mid (y \le c) \mid (y \ge c) 
\]
Here, the program variables $\vars$ is the set $\{x,y\}$ (say of type
$\byte$), and $c$ is a parameter of type $\byte$. Substituting $c$ with
values, e.g., 2, -3, etc., gives atomic predicates $x \le 2$, $y \ge
-3$, etc. $\hfill \qed$
\end{example}

\newcommand{\subP}{\tilde{P}}
\newcommand{\subN}{\tilde{N}}

\begin{algorithm}[t]
\caption{$\mathsf{DecisionTreeLearner}(P,N,\delta)$\label{alg:dtree}}
\SetKwInOut{Input}{input}
\SetKwInOut{Output}{output}
\Input{\begin{itemize}
    \item 
        $P$ : set of positive examples, 
        $N$: set of negative examples, 
    \item $\delta$ : precision threshold, 
    \item $k$ : bound on size of $P,N$ sets used to perform $\mathsf{InvLearn}$ 
\end{itemize}}
\Output{A candidate invariant $\phi$}
\If{$|P|+|N| > k$}{ \nllabel{al:bound}
    $\atom$ = $\displaystyle\argmax_{\atom \in \atoms} (\precision(\atom,P,N))$ \; \nllabel{al:entropy}
    $P' \gets \{ s \mid s \in P, s \models \atom\}$,
    $N' \gets \{ s \mid s \in N, s \not\models \atom\}$ \;
    $\phi \gets (\atom \wedge \mathsf{DecisionTreeLearner}(P',N',\delta) \ \ \vee$ \\ 
    \qquad $(\neg\atom \wedge \mathsf{DecisionTreeLearner}(P-P',N-N',\delta)$\; \nllabel{al:dtlearner}
} \lElse {
    $\phi \gets \mathsf{InvLearn}(P,N,\delta)$ \nllabel{al:invlearn}
}
\Return $\phi$ 
\end{algorithm}

\noindent Next, we define the notion of a signature of a well-formed formula in
$G$. We assume that we are given an
ordered set of positive examples $P = \langle
e_1,\ldots,e_{|P|}\rangle$ and an ordered set of negative examples $N
= \langle f_1,\ldots,f_{|N|}\rangle$. 

\newcommand{\chk}{\mathsf{Check}}
\begin{algorithm}[t]
\caption{$\mathsf{InvLearn}(P,N,\delta)$}\label{alg:InvRevision}
\SetKwInOut{Input}{input}
\SetKwInOut{Output}{output}
\SetKw{Break}{break}
\Input{$P$ : set of positive examples, $N$: set of negative examples, $\delta$ : precision threshold}
\Output{A candidate invariant $\phi$}
$\subP,\subN$ = $\mathsf{subSample}(P,N)$ \nllabel{al:subsample} \;
let $\chk(\psi)$ = $\lambda.\psi\ (\precision(\psi,\subP,\subN) > \delta\ \wedge\ \recall(\psi,\subP,\subN) = 1)$ \;
\While{number of sub-sampling rounds not exceeded}{
    $\ell \gets 1$ \;
    \While{$\ell \leq \mathsf{maxInvLength}$}{
        \ForEach{$\psi$ in $\atoms$}{
            \lIf{$\chk(\psi)$}{
                \Break \nllabel{al:breakatom}
            } \lElse {
                $\Phi \gets \Phi \cup \{ \psi \}$
            }
        }
        \If{$\ell > 1$}{
            \ForEach{$\varphi \in \Phi$ s.t. $\len(\varphi) = \ell-1$}{ 
                $\psi \gets \neg\varphi$ \label{al:neg} \;
                \lIf{$\chk(\psi)$}{
                    \Break \nllabel{al:breakneg}
                } \lElse {
                    $\Phi \gets \Phi \cup \{ \psi \}$ 
                }
            }
            \ForEach{$\varphi_1,\varphi_2$ in $\Phi$ s.t. $\len(\varphi_1) + \len(\varphi_2) = \ell-1$}{
                \ForEach{$\circ \in \{\wedge,\vee,\Rightarrow,\equiv,\ldots\}$}{ 
                    $\psi \gets \varphi_1 \circ \varphi_2$ \label{al:op} \;
                    \lIf{$\chk(\psi)$}{
                        \Break \nllabel{al:breakop} 
                    } \lElse {
                        $\Phi \gets \Phi \cup \{ \psi \}$
                    }
                }
            }
        }
        $\ell \gets \ell + 1$ \nllabel{al:inc}
    }
    \lIf{$\precision(\psi,P,N) > \delta \wedge \recall(\psi,P,N) = 1$}{
        \Return $\psi$ \nllabel{al:return}
    } \lElse {
        $\subP = \subP \cup \mathsf{subSample}(P), \subN = \subN \cup \mathsf{subSample}(N)$ \nllabel{al:resubsample}
    } 
}
\end{algorithm}

\begin{definition}[Formula signature]
Given ordered sets $P$ and $N$, the signature of a formula $\phi$ is
an $(|P|+|N|)$-bit vector $\sigma^\phi$, where for each $i \in [\,1,|P|\,]$ ,
$\sigma^\phi_i = 1$ \emph{iff} $e_i \models \phi$ and $0$ otherwise, and for $i \in [\,|P|+1,|P|+|N|\,]$, $\sigma^\phi_i = 1$ iff $f_{i-|P|} \models \phi$ and $0$ otherwise.
\end{definition}
In plain words, assuming the sets $P$ and $N$ are concatenated, the
bit at a given index in $\sigma$ is $1$ \emph{iff} the corresponding example
in the concatenated set satisfies $\phi$.

We now observe that given the formula signatures for concrete atomic
predicates, it is trivial to recursively compute the signatures for
arbitrary formulas in the grammar using the following rules, where $\mathsf{bw\_op}$ represents the bitwise
application of the corresponding operation.
\[
\begin{array}{lll}
\sigma^{\neg \phi} = \mathsf{bw\_not}(\sigma^\phi), \quad& 
\sigma^{\phi_1 \wedge \phi_2} = \mathsf{bw\_and}(\sigma^{\phi_1},\sigma^{\phi_2}), \quad &
\sigma^{\phi_1 \vee \phi_2} = \mathsf{bw\_or}(\sigma^{\phi_1},\sigma^{\phi_2})
\end{array}
\]
Using the formula signature, we can also compute the precision
and recall of a formula. In the definitions below, we first define
precision and recall, and then provide the expressions over $\sigma$.
\begin{equation}
\label{eq:precision}
\precision(\phi,P,N) = \frac{ \{ s \mid (s \in P) \wedge (s \models \phi) \} }
                            { \{ s \mid (s \in (P \cup N)) \wedge (s \models \phi) \} }
                     = \frac{ |\{ i \mid i \le |P| \wedge \sigma^\phi_i =1 \} }
                            { len(\sigma^\phi) }
\end{equation}
\begin{equation}
\label{eq:precision}
\recall(\phi,P,N) = \frac{ \{ s \mid (s \in P) \wedge (s \models \phi) \} }
                            { |P| }
                     = \frac{ |\{ i \mid i \le |P| \wedge \sigma^\phi_i =1 \} }
                            { |P| }
\end{equation}
\noindent We present the algorithm for $\learnInvariant$ in two parts: a decision 
tree learner and an enumerative formula constructor with sub-sampling. The decision 
tree learning technique is shown in Algorithm~\ref{alg:dtree}. Let the set of 
concrete atomic predicates be denoted by $\atoms$. Previous work such as \cite{transit}
has investigated the use of enumerative solvers for learning expressions, and while 
they work well in practice, their scalability is limited by the length of the
expression to be learned. The idea in the decision tree learner is to only use the 
enumerative solver for bounded length sub-expressions by partitioning the space of
examples using decision trees. Similar (but different) ideas have  been explored in the 
use of unification-based enumerative solvers in \cite{sygus2}. We essentially only use
the enumerative $\mathsf{InvLearn}$ method on positive and negative example sets of 
size less than some fixed bound $k$. So, in Line~\ref{al:bound}, we check if the size
of $P$ or $N$ exceeds $k$, and if yes, call $\mathsf{DecisionTreeLearner}$ to partition 
the sets into smaller, more manageable subsets (Line~\ref{al:dtlearner}). We do this by 
selecting the $\mathsf{atom}$ with highest precision (Line~\ref{al:entropy}) to split
the set of positive examples, recursively calling the method on the resulting smaller
sized subsets. Otherwise, it proceeds to the enumerative solver in $\mathsf{InvLearn}$ (Line~\ref{al:invlearn}).

In Algorithm~\ref{alg:InvRevision}, we $\mathsf{subSample}$ from positive and negative 
sample sets for our enumerative procedure (Line~\ref{al:subsample}). The broad idea is
to only use the respective subsets $\subP$, $\subN$ of the $P,N$ input to 
$\mathsf{InvLearn}$ to learn an expression. Every atom is checked to see if it
satisfies the precision threshold $\delta$ and has $\recall=1$ (Line~\ref{al:breakatom}).
If yes, we have found a suitable atomic expression and the algorithm breaks out of the loop. If not,
we proceed to the enumerative construction of expressions over $\atoms$ (while 
bounding the overall expression length to be less than $\mathsf{maxInvLength}$). 
Each $\psi$ in Alg.~\ref{alg:InvRevision} corresponds to a new formula of length $\ell$
constructed from subformulas of length $\ell-1$ (with the unary negation operator)
in Line~\ref{al:neg}, or subformulas $\varphi_1$,$\varphi_2$ whose lengths add up to $\ell-1$ (with some binary operator) in Line~\ref{al:op}.  Again,
if the resulting $\psi$ satisfies the precision/recall checks, we break
out of the loop (Line~\ref{al:breakneg}/\ref{al:breakop}). Once all
subformulas of length $\ell$ are exhausted, we increment $\ell$. Whenever
we find a candidate $\psi$, we then check it against the entire sets 
$P$ and $N$, and if it still satisfies the precision/recall checks,
the algorithm returns $\psi$ as the candidate invariant. 
If
it fails to find a candidate invariant with desired precision/recall, $\mathsf{subSample}$ is initiated
again to gather more samples and repeat the entire process to learn a
more accurate invariant (Line~\ref{al:resubsample}).

\section{Experimental Validation}
\label{Sec:experiments}

To show the effectiveness and accuracy of our approach, we evaluated
$\mathsf{InvGen}$ on Promela programs written as test programs for the
Spin model checker\cite{holzmann1997spin}. We address the following
three research questions(RQ) in our experiments: \textbf{RQ1}: Are the
final candidate invariants accurate enough after a limited number of
iterations of $\mathsf{InvGen}$ (i.e., after observing  only a limited
number of reachable states)?  \textbf{RQ2}: Is the invariant tight? \textbf{RQ3}: Is $\mathsf{InvLearn}$
efficiently learning a candidate invariant using the positive/negative
examples discovered by the outer loop in $\mathsf{InvGen}$?
\textbf{RQ4}: Are the final candidate invariants useful in reasoning
about the system safety?
As case studies, we used three classical distributed protocols, and
three distributed programs that have a large number of reachable
states, which cannot be exhaustively explored to compute the
invariant. For evaluating RQ4, we considered two additional case
studies (written in the Ivy programming language) from the Ivy model checker \cite{padon2016ivy}.


\myipara{RQ1} We evaluated RQ1 from two perspectives:
(1) We checked if the invariants were sound, and (2) quantified 
if $\mathsf{InvGen}$ could learn an invariant using
a limited number of observed states. For (1),
we plugged the generated invariants as model checking targets for
Spin/Ivy, and checked if the model checker was able to verify them.
We reiterate that this is a final check on whether the invariants are sound and happens
after $\mathsf{InvGen}$ has concluded. For(2), we measured the
ratio of the number of states observed to the total number of reachable 
states for the program (calculated by hand or obtained through the 
model checker). Table~\ref{tab1} presents the results obtained for each
distributed program. We can see $\phi$ can be learned even when the
number of observed states is small in comparison to the number of
reachable states.

\myipara{RQ2} To
quantify the confidence in the tightness of the learned invariant, we used a
statistical model checking (SMC) approach that gives confidence
guarantees on the learned invariant based on Clopper-Pearson (CP)
interval bounds \cite{confidence,clopper}. The main idea is that the
CP bounds provide the confidence interval for the probability of a
Boolean random variable being true. The Boolean random variable in
question for us is whether the candidate invariant is a true
invariant. Every trace sampled in $\mathsf{InvGen}$ is a trial to
check if a state can invalidate the candidate invariant. Thus, after
$n$ sampled traces, if $r$ are successful trials (i.e. they do not
invalidate the candidate invariant), then we can use this to estimate
a confidence interval on the true probability of the candidate
invariant being valid at a given significance level $(1-\alpha)$. The
formula for CP upper interval bound (i.e., when $r = n$) gives the
confidence interval when all $n$ runs are successful:
$[(\frac{\alpha}{2})^{\frac{1}{n}},1]$.  From this formula, if we need
a significance level of $0.95$ (i.e. $\alpha = 0.05$), we can compute
the required number of trials $n$ to be
$\frac{\log(0.025)}{\log(0.95)} \approx 72$.  The number of final
rounds in Table~\ref{tab1} exceeds 72, which allows us to conclude
that each candidate invariant is likely to be valid with a
significance level of $0.95$.

\myipara{RQ3} To evaluate RQ3, 
we measured the execution time for the
revision process in each round and the number of iterations for which
each candidate invariant survives challenges from sampled traces
(including the final candidate invariant).  We observed that the first
candidate invariant learned by $\mathsf{InvGen}$ 
is subject to immediate revision, but the number of rounds for which
the candidate invariant survives steadily increases for each program.
The final candidate invariants survive for a number of iterations
exceeding the number required for SMC to certify it at the
significance level of $0.95$. We note that RQ3 is validated as the 
average execution time for each revision is quite small.

\myipara{RQ4} To address RQ4, we evaluated if the identified
properties were sufficient to imply the safety property of the
specified system model. We reiterate that the safety properties 
were used after $\mathsf{InvGen}$. In
Table~\ref{tab:learnedinvs}, we enumerate the safety properties
that were specified, and the invariant that we learned automatically
without using these safety properties. For all examples, we 
used the safety invariant $\varphi$ specified in the column denoted 
safety property (corresponding to the LTL property $\box \varphi$). 
For the first two examples, we were also able to show that $\varphi \Rightarrow \psi$, where 
$\psi$ was our learned invariant (shown in the third column). The
Ricart and Agrawala example was taken from the Ivy model checker
repository\cite{padon2016ivy}. For the simple consensus protocol(also written in Ivy), since our invariant expression is 
over more variables than the ones used for specifying the safety 
property, we cannot prove that $\varphi \implies \psi$. However, we 
were able to use the Ivy model checker to confirm that the candidate 
invariant is a true invariant. This leads us to an interesting open question of how we can automatically
choose the set of program variables over which we
wish to learn the invariant so that it is effective
at proving a previously unknown safety property.

\begin{table}[!t]
\caption{Evaluation results on distributed systems with large reachable states. Here, $r$ is the number of sample runs of the program that were used to challenge the invariant after the final revision, and $T_r$ is the average execution time for each revision.}\label{tab1}
\begin{tabular*}{\textwidth}{@{\extracolsep{\fill}}l c c c c r r c}
\toprule
Distributed & Num. & Num. & Num. & $|V|/|R|$  & $r$  & $T_r(s)$  & $\Phi$ \\
Program & Vars. & Reachable & Visited & &  &  & verified? \\
&       & States (R) & States (V) &  &  &   &  \\
\midrule
producer consumer\cite{dijkstra1975guarded} & 4 & 1.03M & 2496 & $0.002$ & 119 & 0.05 & Y \\
distributed lock server\cite{dls} & 4 & 12.2K & 183 & $0.015$ & 92 & 1.89 & Y \\
dining philosophers\cite{dp} & 4 & 492K & 136 & $0.00027$ &102 & 0.98 & Y \\
abp\cite{holzmann1997spin} & 5 & $\infty$ & 517 & 0 & 120 & 0.001 & Y\\
leader election\cite{holzmann1997spin} & 3 & 26K & 63 & $0.0024$ & 95 & 2.50  & Y\\
UPPAAL train/gate\cite{holzmann1997spin} & 7 & 16.8M & 410 & $0.000024$ & 74 & 2.37 & Y\\
\bottomrule
\end{tabular*}
\end{table}

\begin{table}[t]
\caption{Safety properties (obtained from Promela/Spin\cite{holzmann1991promela,holzmann1997spin}) and learned
invariants using $\mathsf{InvGen}$ (Alg.~\ref{alg:InvSynAbs}). We
use the following shorthand for predicates/variable names in the above formulas for brevity.
In UPAAL train/gate, let $v_1 = \mathsf{gate@Add1}, v_2 =
\mathsf{gate@Add2}, v_3 = \mathsf{len(list)}$.
In Rickart-Agarwala, let $h_i = \mathsf{holds(Ni)}$, $req_{ij} =
\mathsf{requested(Ni,Nj)}$, and $rep_{ij} = \mathsf{replied(Ni,Nj)}$.
In simple consensus, $d_{ijk} = \mathsf{decided(Ni, Qj, Vk)}$,
$l_{ij} = \mathsf{leader(Ni,Qj)}$,
$vs_{ij} = \mathsf{votes(Ni,Nj)}$,
$vd_i = \mathsf{voted(Ni)}$.\label{tab:learnedinvs}}
\begin{tabular*}{.99\textwidth}{p{1.5cm}@{\hspace{0.5em}}p{1.7cm}@{\hspace{2em}}p{8cm}}
\toprule
Case Study & Safety Prop. & Learned Invariant \\
\midrule
UPAAL train/gate 
& $\mathsf{(v_1}$ $\mathsf{\lor v_2)}$ $\Rightarrow$ $\mathsf{(v_3 < N)}$
& $((\mathsf{v_3} = 0) \land 
        ((\mathsf{v_1} = 0) \lor 
         (\mathsf{v_2} = 0))
    )$ 
    $\lor ((\mathsf{v_3} > 0) \land (\mathsf{v_3} \leq 3) \land (\mathsf{v_1} = 0))) \lor ((\mathsf{v_3} = 4) \land (\mathsf{v_1} = 0) \land (\mathsf{v_2} = 0)))$ \\
\midrule
Ricart Agrawala
& $ h_1 \land h_2 \Rightarrow \mathsf{(N1 = N2)}$
& $ \mathsf{(N1 \neq N2)} \Rightarrow (\neg req_{22} \land \neg req_{11} \land \neg rep_{22} \land \neg rep_{11} \land 
                 (\neg h_1 \lor \neg h_2))$
\\
\midrule
Simple Consensus
& $(d_{111} \land d_{222})$ $\Rightarrow$ $(v_1 = v_2)$
& $(N1 \neq N2 \land N1 \neq N3 \land N2 \neq N3)$
  $\Rightarrow (\neg l_{31} \land (vd_1 \land \neg d_{311}) \lor (\neg vd_1 \land \neg vs_{11} \land \neg vs_{31})) \lor (l_{31} \land \neg l_{21} \land \neg d_{111})$ \\
\bottomrule
\end{tabular*}

\end{table}

\section{Related Work and Discussion}
\label{sec:related}
\mypara{Related Work} Automatic invariant generation is a fundamental problem in program verification. Early attempts built on first-order theorem
provers such as Vampire~\cite{vampire,tp_v}. but were fundamentally limited by the scalability of
the underlying theorem prover. Subsequent work can be categorized into approaches that guarantee provably correct
invariants, and those that use data-driven approaches to obtain \emph{likely invariants}.
An example of the first family of techniques includes counter-example guided invariant generation (CEGIR). These
approaches rely on a combination of inductive synthesis and verification. Prominent examples
of CEGIR approaches include
the work on implication counter-examples (ICE~\cite{garg2014ice}, ICE-DT~%
\cite{icedt}) and
 FreqHorn~\cite{freqhorn, revisedFre}. FreqHorn identifies inductive
invariants by sampling from a grammar and verifying against an SMT solver. This approach is complementary to
ours, since we sample program executions and learn candidate invariants that effectively distinguish observed from
non-observed states.
These techniques rely on a constraint solver, a theorem prover, or
a model checker to certify that the system satisfies the invariant being proposed.
These approaches can be accelerated in multiple ways, including by obtaining additional information from the verifier
(see, for example, recent work on non-provability information~\cite{nonprovable}), or by using machine learning to
efficiently explore the space of candidate invariants: examples include Code2Inv~\cite{code2inv}, counter-example guided
neural synthesis~\cite{nn}, and ACHAR~\cite{almostcorrect}.
The cost of model checking and the rapid growth of the space of possible invariants limit the complexity of systems
to which these techniques can be applied. On the other hand, our approach in $\mathsf{InvGen}$ relies on execution traces and
revision events, and can thus be freely applied to complex systems.

Work on data-driven invariant generation was pioneered by Daikon~\cite{ernst2007daikon}, which maintained a pool of
candidate invariants that was iteratively pruned as new reachable states were observed. The main limitation is its sensitivity to the initial pool of candidate invariants, and its inability to learn invariants with
complex Boolean structure, including properties with negation. More recent examples of this approach include NumInv~%
\cite{numinv}, DistAI~\cite{distai}, and DuoAI~\cite{yao2022duoai}. DistAI attempts to verify distributed protocols
expressed using the Ivy language~\cite{padon2016ivy}. Although this approach also relies on observations of system
executions, unlike $\mathsf{InvGen}$, it is notably guided by a target property being verified, and this might cause the system to
loop forever in cases where the target property fails to hold.


\noindent\textbf{Discussion and Future Work}
$\mathsf{InvGen}$ is an automated and practical framework for observing non-deterministic system behaviors and generating accurate
system properties. The approach is data-driven, leveraging counter-examples and speculative negative states to guide the
refinement process. The learning model is template-free and decision-tree-based. 
Future work includes multiple promising directions. First, exploring alternative speculative sampling techniques
enhances the convergence of the learning process. Second, the decision tree learning method could be further
investigated, including the separator selection and the exploit-exploration trade-off.


\bibliographystyle{splncs04}
\bibliography{bib}


\end{document}